\documentclass[12pt]{article}
\usepackage[inner=1in,outer=1in,top=1in,bottom=1in,marginparwidth=.7in,marginparsep=.1in]{geometry}
\usepackage{mathtools} 
\usepackage{graphicx,multirow}
\usepackage{amssymb,latexsym,amsmath,amsthm,mathrsfs}
\usepackage{color}
\usepackage{tikz}
\usepackage[all]{xy} 
\usepackage{caption}
\usepackage{subcaption}
\usepackage{url}
\numberwithin{equation}{section}
\theoremstyle{plain}
\newtheorem{theorem}{Theorem}[section]

\newtheorem{lemma}[theorem]{Lemma}

\newtheorem{proposition}[theorem]{Proposition}

 \newtheorem*{thm*}{Theorem}

\theoremstyle{definition} 
\newtheorem{remark}[theorem]{Remark} 

\newtheorem{definition}[theorem]{Definition}

 \newtheorem{example}[theorem]{Example}

\newcommand{\K}{\mathbb{K}}
\newcommand{\Z}{\mathbb{Z}}
\newcommand{\R}{\mathbb{R}}
\newcommand{\C}{\mathbb{C}}
\renewcommand{\a}{\alpha}
\renewcommand{\b}{\beta}

\title{Finite Alphabet Phase Retrieval}
\author{Tamir Bendory, Dan Edidin, Ivan Gonzalez}

\begin{document}
\maketitle
\begin{abstract}
We consider the finite alphabet phase retrieval problem: recovering a signal whose
  entries lie in a small alphabet of possible values from its Fourier magnitudes. 
  This problem arises in the celebrated technology of X-ray crystallography to determine the atomic structure of biological molecules. 
   Our main result states that for generic values of the alphabet,
 two signals have the same Fourier magnitudes if and only if several partitions have the same difference sets. Thus, the finite alphabet phase retrieval problem reduces to the combinatorial problem of determining a signal from those difference sets.
   Notably, this result holds true when one of the letters of the alphabet is zero, namely, for sparse signals with finite alphabet, which is the situation in X-ray crystallography.   
\end{abstract}

\section{Introduction}

X-ray crystallography is a leading technology for determining the 3-D atomic structure of biological molecules, such as proteins. 
Indeed, thousands of  new crystal structures are resolved each year, and  more than a dozen Nobel Prizes have been awarded for work involving
X-ray crystallography. 
In X-ray crystallography, the
crystal---a periodic arrangement of a repeating unit---is illuminated with a beam of X-rays, producing a diffraction pattern, which is  equivalent to the   magnitude of the Fourier transform of the  crystal.
The signal to be estimated (the electron density function of the crystal)  is supported  only at the sparsely-spread positions of atoms~\cite{millane1990phase}.
Therefore, the {\em crystallographic phase retrieval problem} entails recovering a sparse signal from its Fourier magnitudes. 
The crystallographic phase retrieval is a special case of the {\em phase retrieval} problem, which refers to all problems that involve recovering a signal from its Fourier magnitudes, see~\cite{shechtman2015phase,bendory2017fourier,grohs2020phase,bendory2022algebraic} and reference therein. 
A detailed mathematical model of X-ray crystallography is introduced
in~\cite{elser2018benchmark}.

A recent paper by a subset of the authors provides the  first rigorous attempt to establish a mathematical theory for the crystallographic phase retrieval problem~\cite{bendory2020toward}.  
In particular, it was conjectured that a generic sparse signal $x\in\R^N$ whose support has size at most $K$ is uniquely determined, up to unavoidable ambiguities, as long as $K\leq N/2$.
The conjecture was verified for a small set of parameters; see also~\cite{ghosh2022sparse,ranieri2013phase}.

 In practice, however, a more accurate model of the crystallographic phase retrieval problem  should account
	for sparse signals whose non-zero entries are taken from a finite (small) alphabet; this alphabet models the relevant type of atoms, such as hydrogen, oxygen, carbon,
	nitrogen, and so on. 
In this paper, we make a first step towards this direction.  
Specifically, we study the problem of recovering a {discrete one-dimensional periodic} signal, whose entries are taken from a finite alphabet, from its Fourier magnitudes. We refer to this problem as the {\em finite alphabet phase retrieval problem}.
{We note that  recovering problems of finite alphabet signals were studied before, but mostly under linear models~\cite{keiper2017compressed,keiper2017reconstruction,sarangi2021no,sarangi2023super}.}

In particular, we show  that for generic choice of entries in the alphabet, 
the finite alphabet phase retrieval problem can be reduced to a combinatorial problem
involving difference sets. This is similar to the situation for binary
phase retrieval (a problem studied before~\cite{elser2017complexity}) but new combinatorial subtleties can occur.
More specifically, we show that two signals with entries taken
from a finite alphabet have the same Fourier magnitudes if and only if the associated partitions have the same difference sets;
see Proposition~\ref{prop:main}.  
Notably, this result remains true when one of the letters of the alphabet is zero, namely, for sparse signals with finite alphabet; see Theorem~\ref{thm:sparse}. 
 This is the situation in X-ray crystallography where crystals are typically very sparse; the non-zero values occupy only $\sim 1/100$ of the signal's support~\cite{elser2017complexity}.
Unfortunately, the problem of analyzing if specific difference sets determine a set uniquely (up to unavoidable symmetries) is an extremely difficult combinatorial problem \cite[p.350]{rosenblatt1982structure}, \cite[Section 3]{rosenblatt1984phase}. 
Therefore, we cannot provide a complete characterization when a finite-alphabet signal can be recovered uniquely, up to unavoidable symmetries, from its  Fourier magnitudes.

\begin{remark}
  {In this paper we follow a long tradition in the crystallography literature and restrict our discussion to the one-dimensional phase retrieval problem for periodic signals. This corresponds to viewing our signals as functions on
    the cyclic group $\Z_N$. As was the case in \cite{bendory2020toward}, much of our theory can be readily adapted to study functions on any abelian group such as $\Z_N \times \Z_N$; see Section \ref{sec.abelian}.}
\end{remark}

The rest of the paper is organized as follows. Section~\ref{sec:background} provides a necessary background on difference sets, homometric sets, and autocorrelations.  
Section~\ref{sec:binary} begins our analysis by studying signals whose entries are taken from two-letters alphabet. Section~\ref{sec:finite} presents and proves our main results about phase retrieval of signals whose entries are taken from a finite alphabet. Section~\ref{sec:examples} provides a few examples and introduces the intriguing notion of pseudo-equivalent partitions. {Section~\ref{sec.extension} introduces a few possible directions for future research on this problem.}

\section{Background} \label{sec:background}
We begin by introducing basic definitions about difference sets. 
For any $i,j \in [0,N-1]$, we define  the {\em cyclic distance} between $i$ and $j$ by 
\begin{equation}
	d(i,j) = \min\{N - |i-j|, |i-j|\}. 
\end{equation}
We note that $d(i,j) \in [0,\lfloor N/2\rfloor]$ as illustrated in
Figure \ref{fig.distance}.
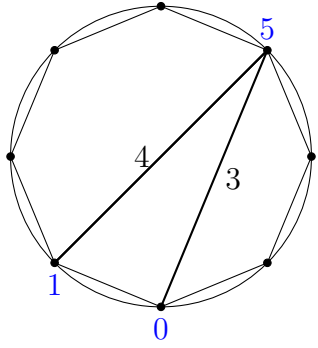
\begin{figure}
\def\n{8} 
\def\r{2}
	\begin{center}\begin{tikzpicture}[
		dot/.style={draw,fill,circle,inner sep=1pt}
		]
		\draw (0,0) circle (\r);
		
		\foreach \i in {1,...,\n} {
			\node[dot] (w\i) at (\i*360/\n:\r) {}; 
		}
		\foreach \i [evaluate=\i as \j using {int(\i+1)}] in {1,...,\n} {
			\draw[-] (\i*360/\n:\r) -- (\j*360/\n:\r);	
	} 

	\draw[-,thick] (5*360/\n:\r) -- (1*360/\n:\r) node[above,blue]{5}  node[midway,left]{4};
	\draw[-,thick] (1*360/\n:\r) -- (5*360/\n:\r) node[below,blue ]{1} ;
	\draw[-,thick] (1*360/\n:\r) -- (6*360/\n:\r) node[below, blue]{0} node[midway,right]{3};
	\end{tikzpicture}
	\end{center}
\caption{Illustration of the cyclic distances between the points $0$ and $5$ and $1$ and $5$ in $[0,7]$.}\label{fig.distance}
\end{figure}

\begin{definition}[Difference sets]
  Let $A, B$ be two subsets of $[0,N-1]$. 
  We define the
  cyclic difference multi-set by
  \begin{equation}
  	A-B = \{ d(i,j) \mid i \in A, 
  	j \in B\}. 
\end{equation}
In particular, the self difference multi-set is given by
 $A-A = \{ d(i,j) \mid i \leq j
\in A\}$.
\end{definition}

\begin{definition}[The dihedral group]
  The dihedral group $D_{2N}$ is the group of symmetries of the regular
  $N$-gon. It is a group of order $2N$, which is generated by two elements $r$ (rotation) and $s$ (reflection). The elements $r, s$ satisfy the relations
  $r^N = s^2 = e$ and $rs = sr^{N-1}$, where $e$ is the identity.
  The group $D_{2N}$ acts on the set $[0,N-1]$ as follows. The element
  $r \in D_{2N}$ acts by cyclic shift; i.e., $r(i) = (i+1)\bmod N$ and
  the element $s$ acts by the reflection $s(i)= N-i$.
\end{definition}

Next, we define {equivalence} classes. Since the Fourier magnitude is invariant under cyclic shifts and reflection (i.e., under the dihedral group), we can only hope to determine the support of a signal from the signal's Fourier magnitudes up to an action of the dihedral group $D_{2N}$. 

\begin{definition}
	Two subsets $A, B \subset [0,N-1]$ are {\em equivalent} if there exists an element $\sigma \in D_{2N}$ such that $B = \sigma A$.
\end{definition}

Another fundamental definition is that of homometric sets: subsets with the same difference set.

\begin{definition}
  Two subsets $A, B$ are {\em homometric} if $A-A = B-B$.
\end{definition}

\begin{lemma} \label{lem:homometric}
  If $A, B$ are equivalent then they are homometric.
\end{lemma}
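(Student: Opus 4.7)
The plan is to show that every element of $D_{2N}$ preserves the cyclic distance function $d$, and then to push this invariance through to the multiset $A - A$. Since $D_{2N}$ is generated by the rotation $r$ and the reflection $s$, it suffices to verify invariance under these two generators separately.

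First I would check that $d(r(i), r(j)) = d(i,j)$ for all $i,j \in [0,N-1]$. This is a direct computation: reducing $i+1$ and $j+1$ modulo $N$ changes $|i - j|$ only when exactly one of them wraps around, in which case $|i-j|$ is replaced by $N - |i-j|$, and the quantity $\min\{|i-j|, N-|i-j|\}$ is unchanged. Next I would verify $d(s(i), s(j)) = d(i,j)$, which is even more immediate: $|(N-i) - (N-j)| = |i-j|$, so both quantities defining the minimum are the same. Combining these, every $\sigma \in D_{2N}$ satisfies $d(\sigma(i), \sigma(j)) = d(i,j)$.

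Now suppose $B = \sigma A$ for some $\sigma \in D_{2N}$. Since $\sigma$ is a bijection of $[0,N-1]$, it restricts to a bijection $A \to B$. Therefore the map $(i,j) \mapsto (\sigma(i), \sigma(j))$ sets up a bijection between pairs drawn from $A$ and pairs drawn from $B$, and by the invariance established above it preserves the value of $d$. Hence the multisets $\{d(i,j) \mid i,j \in A\}$ and $\{d(i,j) \mid i,j \in B\}$ coincide, and the same is true after restricting to $i \leq j$ (since $d$ is symmetric in its two arguments, this restriction simply halves the multiplicity of each off-diagonal entry uniformly). Thus $A - A = B - B$ and the two sets are homometric.

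There is no real obstacle here; the only thing to be careful about is the multiset bookkeeping, in particular making sure that the ordering convention $i \leq j$ used in the definition of the self-difference multiset does not cause trouble when $\sigma$ fails to preserve the order on $[0,N-1]$. Symmetry of $d$ resolves this immediately.
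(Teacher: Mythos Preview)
Your argument is correct and is precisely the approach the paper takes: the paper's proof is the single sentence that the lemma is immediate from the invariance of $d(i,j)$ under cyclic shifts and reflection, and you have simply written out that invariance check and the multiset bookkeeping in full. There is nothing to add.
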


\begin{proof}
  The lemma is an immediate consequence of the fact that the cyclic distance $d(i,j)$ is invariant under cyclic shifts
  and reflection.
\end{proof}

A key fact we use about homometric subsets of $[0,N-1]$ is the following
result, originally stated by Patterson \cite{patterson1944ambiguities}. For a modern proof, see \cite[Corollary 1]{iglesias1981pattersons} or \cite{buerger1976proofs}.

\begin{theorem}[Patterson] \label{thm.patterson}
  Two sets $ A,B \subset [0,N-1]$ are homometric if and only
  if their complements are homometric as well.
\end{theorem}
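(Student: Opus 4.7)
I would prove this via Fourier analysis, using that the multi-set $A - A$ carries the same data as the Fourier magnitudes of the indicator function $\mathbf{1}_A$. Concretely, the cyclic autocorrelation of $\mathbf{1}_A$ records, for each $k \in \{0, \ldots, N-1\}$, the number of ordered pairs $(i, j) \in A \times A$ with $i - j \equiv k \pmod{N}$; after identifying $k$ with $N - k$ (which is exactly what the cyclic distance $d$ does), this is equivalent data to the multi-set $A - A$. Its discrete Fourier transform is $|\widehat{\mathbf{1}_A}|^2$, so $A$ and $B$ are homometric if and only if $|\widehat{\mathbf{1}_A}(\omega)|^2 = |\widehat{\mathbf{1}_B}(\omega)|^2$ for every frequency $\omega$.

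The heart of the argument is then the trivial identity $\mathbf{1}_{A^c} = \mathbf{1}_{[0,N-1]} - \mathbf{1}_A$. Taking DFTs and using $\widehat{\mathbf{1}_{[0,N-1]}}(\omega) = N \, \delta_{\omega, 0}$ yields $\widehat{\mathbf{1}_{A^c}}(\omega) = -\widehat{\mathbf{1}_A}(\omega)$ for $\omega \neq 0$ and $\widehat{\mathbf{1}_{A^c}}(0) = N - |A|$. Hence $|\widehat{\mathbf{1}_{A^c}}(\omega)|^2$ agrees with $|\widehat{\mathbf{1}_A}(\omega)|^2$ at every nonzero frequency, and equals $(N - |A|)^2$ at $\omega = 0$. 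Since $|A|$ can be read off $A - A$ as the multiplicity of $0$, the multi-set $A - A$ alone determines $|\widehat{\mathbf{1}_{A^c}}|^2$, and hence determines $A^c - A^c$. So if $A - A = B - B$, we have both $|A| = |B|$ and $|\widehat{\mathbf{1}_A}|^2 = |\widehat{\mathbf{1}_B}|^2$, forcing $|\widehat{\mathbf{1}_{A^c}}|^2 = |\widehat{\mathbf{1}_{B^c}}|^2$, i.e., $A^c$ and $B^c$ are homometric. The converse follows by applying the same argument to $A^c$ and $B^c$ and invoking $(A^c)^c = A$.

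\textbf{Main obstacle.} The argument is essentially one line of Fourier manipulation, so there is no serious conceptual obstacle. The only bookkeeping I would take care with is the precise dictionary between the combinatorial multi-set $A - A$ (indexed by cyclic distances in $[0, \lfloor N/2 \rfloor]$) and the autocorrelation (a palindromic function on $\{0, \ldots, N-1\}$); in particular the boundary distances $k = 0$ and $k = N/2$ (when $N$ is even) behave slightly differently under the ordered-to-unordered conversion. Once that dictionary is in place, the whole content of Patterson's theorem is the observation that $\mathbf{1}_A$ and $\mathbf{1}_{A^c}$ can only disagree in Fourier magnitude at the DC frequency $\omega = 0$.
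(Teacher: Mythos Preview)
Your Fourier argument is correct and complete: the dictionary between the multi-set $A-A$ and the palindromic autocorrelation of $\mathbf{1}_A$ is exactly as you describe (with the harmless factor-of-two adjustment at $k=0$ and, when $N$ is even, at $k=N/2$), and once that is in place the identity $\widehat{\mathbf{1}_{A^c}}(\omega) = N\delta_{\omega,0} - \widehat{\mathbf{1}_A}(\omega)$ immediately gives the result.

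There is nothing to compare against in this paper, however: the paper does not supply its own proof of Patterson's theorem but instead quotes it as a known fact and refers the reader to Iglesias and Buerger for modern proofs. Those references give combinatorial arguments (counting incidences between $A$, $A^c$, and their shifts), whereas your route goes through the power spectrum. The two are equivalent in content---your autocorrelation identity is the inverse DFT of the combinatorial count---but your version has the virtue of being a one-line computation once the Fourier dictionary is set up, and it meshes naturally with the rest of the paper, which freely passes between autocorrelations and power spectra.
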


We now consider the case of partitions. Let $A_1, \ldots, A_K$
and $B_1, \ldots, B_K$ be two ordered partitions of $[0,N-1]$.
\begin{definition}
  Two ordered partitions  $A_1, \ldots , A_K$ and $B_1, \ldots , B_K$ are
  {\em homometric} if $A_i -A_j = B_i -B_j$ for all pairs $i,j \in \{1,\ldots,K\}$.
  Two ordered partitions  $A_1, \ldots , A_K$ and $B_1, \ldots , B_K$ are
  {\em equivalent} if there exists
  $\sigma \in D_{2N}$ such that $B_i = \sigma A_i$ for all $i \in \{1,\ldots,K\}$.
\end{definition}

Lemma~\ref{lem:homometric} can be directly extended to ordered partitions.
\begin{lemma} 
  Equivalent partitions are homometric.
\end{lemma}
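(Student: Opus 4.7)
The plan is to imitate the argument of Lemma~\ref{lem:homometric} pair-wise, leveraging the fact that the only property used there is $D_{2N}$-invariance of the cyclic distance $d(\cdot,\cdot)$, which has nothing to do with whether the two points come from the same set or from two different sets.

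More concretely, I would fix an element $\sigma \in D_{2N}$ witnessing the equivalence, so $B_i = \sigma A_i$ for every $i \in \{1,\ldots,K\}$. For a fixed pair of indices $i,j$, I would then observe that $\sigma$ restricts to a bijection $A_i \to B_i$ and $A_j \to B_j$, so that sending $(a,a') \mapsto (\sigma(a),\sigma(a'))$ defines a bijection between the pair-multisets $\{(a,a') : a \in A_i,\, a' \in A_j\}$ and $\{(b,b') : b \in B_i,\, b' \in B_j\}$. Applying $d$ to both sides and invoking the invariance $d(\sigma(a),\sigma(a')) = d(a,a')$ (which holds because $\sigma$ is a composition of cyclic shifts and the reflection $s(k) = N - k$, both of which manifestly preserve cyclic distance on $[0,N-1]$), I conclude that the multisets $A_i - A_j$ and $B_i - B_j$ coincide.

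Since the pair $(i,j)$ was arbitrary, this gives $A_i - A_j = B_i - B_j$ for all $i,j \in \{1,\ldots,K\}$, which is exactly the definition of homometric ordered partitions. I do not anticipate any real obstacle: the only point worth stating carefully is the $D_{2N}$-invariance of $d$, and this has already been used implicitly in Lemma~\ref{lem:homometric}, so the proof is essentially a one-line reduction to that lemma applied to each pair $(A_i,A_j)$, $(B_i,B_j)$.
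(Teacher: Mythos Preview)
Your proposal is correct and is precisely the ``direct extension of Lemma~\ref{lem:homometric}'' that the paper alludes to without writing out: apply the $D_{2N}$-invariance of $d(\cdot,\cdot)$ to each pair $(A_i,A_j)$ via the single witnessing $\sigma$. There is nothing to add; the paper itself omits the proof as routine.
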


Before moving to the next section, we remind the reader of a couple of definitions from signal processing. 

\begin{definition}[Power spectrum and periodic autocorrelation] The power
	spectrum of a signal $x \in \C^N$ is  the vector $|\hat{x}|^2\in\R^N_{\geq 0}$,  where $\hat{x}$ is the discrete Fourier transform (DFT) of $x$ and the
  absolute value is taken componentwise.
  The periodic  auto-correlation of $x$ is defined by 
  \begin{equation}
  	a_x[\ell] = 	\sum_{n=0}^{N-1}x[n]\overline{x[\ell + n]},
  \end{equation}
  where all indices
  are taken modulo $N$.
\end{definition}

A key fact, first observed by Patterson \cite{patterson1934fourier,
  patterson1935direct}, is that
the DFT of $a_x$ is the power spectrum. 
The phase retrieval problem is thus equivalent to the problem of recovering
a signal from its periodic auto-correlation.
In the sequel, we use the terms Fourier magnitudes, power spectrum and autocorrelation interchangeably.

\section{Binary and two-alphabet phase retrieval} \label{sec:binary}
The binary phase retrieval problem is the problem of recovering a
binary signal $ x \in \R^N $ from its periodic auto-correlation $ a_x \in \R^N$~\cite{elser2017complexity,bendory2022sparse}.
Let $S(x)$ denote the support of a signal $x$. 
A well known result for binary signals states that  $a_x = a_{x'}$ if and only if $S(x) - S(x) = S(x') - S(x')$; i.e.,
the two supports are homometric (e.g.,~\cite{bendory2020toward}).

Let us expand upon this result and  replace the zeros and ones in the traditional binary phase retrieval with
arbitrary scalars $\a$  and $\b$.
 We refer to this problem
as the {\em  two-alphabet phase retrieval problem}. When either $\a$ or
$\b$ is zero, then this problem reduces to the binary phase retrieval problem.
Otherwise, we do not have a well-defined notion of a support set since neither one of the two letters is assumed to be zero. Instead, we consider the sets
\[ S_\a(x) = \{i \mid x[i] = \a \} \qquad\qquad\text{and}\qquad\qquad S_\b(x) = \{j \mid x[j] = \b\}.\]

We prove the following result. 
\begin{theorem}\label{theorem:1}
  For a generic choice of values of $\a,\b$, the following
  are equivalent for signals $ x,x' \in \{\a,\b\}^N $.
  \begin{description}
\item{(i)} $a_x = a_{x'}$
\item{(ii)} $S_\a(x)$ and $S_\a(x')$ are homometric
\item{(iii)} $S_\b(x)$ and $S_\b(x')$ are homometric
\item{(iv)} The ordered partitions $(S_\a(x),S_\b(x))$ and $(S_\a(x'), S_\b(x'))$are homometric.
  \end{description}
\end{theorem}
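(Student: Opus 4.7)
The plan is to close the cycle (iv)$\Rightarrow$(i)$\Rightarrow$(ii)$\Rightarrow$(iv), together with (ii)$\Leftrightarrow$(iii) via Patterson's Theorem~\ref{thm.patterson}. Only the step (i)$\Rightarrow$(ii) uses the generic hypothesis.

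For (iv)$\Rightarrow$(i), I would partition the summation index according to whether $x[n]$ equals $\a$ or $\b$ and expand
\[
a_x[\ell] = |\a|^2\, a_{\mathbf{1}_{S_\a(x)}}[\ell] + |\b|^2\, a_{\mathbf{1}_{S_\b(x)}}[\ell] + \a\bar{\b}\, C^x_{\a\b}[\ell] + \bar{\a}\b\, C^x_{\b\a}[\ell],
\]
where $C^x_{\a\b}[\ell] = |\{n : n \in S_\a(x),\ n+\ell \in S_\b(x)\}|$ and $\mathbf{1}_A$ denotes the indicator of $A$. Each of the four coefficients is determined by the three cyclic difference multisets attached to the ordered partition $(S_\a(x),S_\b(x))$, so (iv) yields $a_x=a_{x'}$. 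The equivalence (ii)$\Leftrightarrow$(iii) is immediate from Patterson's theorem since $S_\b(x)$ is the complement of $S_\a(x)$ in $[0,N-1]$. For (ii)$\Rightarrow$(iv), the identity $a_{\mathbf{1}_A}[\ell]+C_{A,A^c}[\ell]=|A|$ (which holds because the summation indices cover $A$ exactly once) combined with the relation between cyclic autocorrelation and the cyclic self-difference multiset shows that the multiplicity of $\ell\in(0,\lfloor N/2\rfloor)$ in $A-A^c$ equals $2|A|-2\,m(\ell)$, where $m(\ell)$ is the multiplicity of $\ell$ in $A-A$. Since homometric sets have equal cardinality, (ii) then forces equality of all three difference multisets, namely (iv).

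The main obstacle is (i)$\Rightarrow$(ii). First, $a_x[0]=|\a|^2|S_\a(x)|+|\b|^2|S_\b(x)|$ together with $|S_\a(x)|+|S_\b(x)|=N$ recovers the two cardinalities provided $|\a|^2\ne|\b|^2$; this is precisely the genericity condition. The decisive trick is to introduce the auxiliary signal
\[
y := x - \a\mathbf{1} = (\b-\a)\,\mathbf{1}_{S_\b(x)},
\]
whose autocorrelation equals $|\b-\a|^2\, a_{\mathbf{1}_{S_\b(x)}}[\ell]$, while a direct expansion also expresses $a_y[\ell]$ as $a_x[\ell]$ plus a correction depending only on $\a$, $\b$, and the now-known sum $\hat{x}[0]=\a|S_\a(x)|+\b|S_\b(x)|$. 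Since $\a\ne\b$, this recovers $a_{\mathbf{1}_{S_\b(x)}}$ from $a_x$, and by symmetry also $a_{\mathbf{1}_{S_\a(x)}}$. As the cyclic autocorrelation of an indicator function is equivalent to its cyclic self-difference multiset, this yields (ii). The essential difficulty this shift trick overcomes is disentangling the autocorrelations of the two indicator functions from the single scalar autocorrelation of $x$: subtracting a scalar multiple of the all-ones vector collapses three of the four bilinear terms, leaving a single indicator autocorrelation to read off.
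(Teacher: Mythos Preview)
Your argument is correct, and for the implications $(iv)\Rightarrow(i)$, $(ii)\Leftrightarrow(iii)$, and $(ii)\Rightarrow(iv)$ it is essentially the same as the paper's: expand $a_x[\ell]$ as a quadratic form in $\a,\b$ whose coefficients are difference-set multiplicities, invoke Patterson for the complementary sets, and recover the cross difference set $S_\a(x)-S_\b(x)$ from the fact that the three difference multisets together partition $[0,N-1]-[0,N-1]$.

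The genuine divergence is in $(i)\Rightarrow(ii)$. The paper argues that for generic $\a,\b$ the monomials $\a^2,\b^2,\a\b$ are linearly independent over the integers, so equality of the quadratic forms $a_x[\ell]=a_{x'}[\ell]$ forces equality of all coefficients, giving $(iv)$ directly. Your shift trick $y=x-\a\mathbf{1}$ is a different and more explicit maneuver: once $|S_\a(x)|$ is read off from $a_x[0]$, the quantity $\hat{x}[0]$ is known, $a_y$ is computable from $a_x$, and $a_y=|\b-\a|^2\,a_{\mathbf{1}_{S_\b(x)}}$ isolates the indicator autocorrelation. This buys you a sharper hypothesis---you only need $|\a|^2\neq|\b|^2$ (equivalently $\a\neq-\b$ in the real case), not full genericity---and the argument is constructive. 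The paper's linear-independence viewpoint, on the other hand, is what extends cleanly to alphabets of size $K$ (Proposition~\ref{prop:main}), where there is no analogue of subtracting a single constant vector to collapse all but one term.
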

\begin{remark}
  {By generic choice of $\a,\b$, we mean that the set of $\a,\b$ for which the conclusion of the theorem does not hold is contained in the zero set
    of a collection of non-zero polynomials in $\R[\a,\b]$. In particular, the pairs $(\a,\b) \in \R^2$ for which the theorem holds has full Lesbegue measure.}
\end{remark}

\begin{proof}
  Clearly $(iv)$ implies $(ii)$ and $(iii)$. If we show that
  $(ii)$ implies $(iv)$ then by symmetry we can also 
  conclude that $(iii)$ implies $(iv)$. To see that $(ii)$ implies
  $(iv)$ we use Patterson's Theorem, Theorem~\ref{thm.patterson}. Note that $S_\b(x) = S_\a(x)^{c}$,
  so by Patterson's Theorem  $S_\a(x)$ and $S_\a(x')$ are homometric if and only if $S_\b(x)$ and $S_\b(x')$ are also homometric. To show
  that the partitions $(S_\a(x), S_\b(x))$ and $(S_\a(x'), S_\b(x'))$
  are homometric we must show that $S_\a(x) - S_\b(x) = S_\a(x') -S_\b(x')$.
  This follows from the fact that the difference sets $S_\a(x) - S_\a(x), S_\a(x) - S_\b(x), S_\b(x) - S_\b(x)$ (respectively $S_\a(x') - S_\a(x'), S_\a(x') - S_\b(x'),
  S_\b(x') - S_\b(x')$) form a partition of the multi-set
  $[0,N-1] - [0,N-1]$.

 Let $a_x[\ell]$ and $a_{x'}[\ell]$ be the $\ell$-th entry of $a_x$ and $a_{x'}$, respectively. Then, 
  $a_x[\ell] = m_\ell \alpha^2 + n_\ell \beta^2 + p_\ell \alpha \beta$,
  and $a_{x'}[\ell] = m'_\ell \alpha^2 + n'_\ell \beta^2 + p'_\ell \alpha \beta$, 
  where $m_\ell$ (resp. $m'_\ell$) is the multiplicity of $\ell$ in $S_\alpha(x) - S_\alpha(x)$ (resp. $S_\alpha(x') - S_\alpha(x')$),
$n_\ell$ (resp. $n'_\ell$) is the multiplicity of $\ell$ in $S_\beta(x) - S_\beta(x)$ (resp. $S_\beta(x') - S_\beta(x')$),
  and
  $p_\ell$ (resp. $p'_\ell$) is the multiplicity of $\ell$ in $S_\alpha(x) - S_\beta(x)$ (resp. $S_\alpha(x') - S_\beta(x')$).
  Hence, if $(S_\alpha(x), S_\beta(x))$ and $(S_\alpha(x'), S_\beta(x'))$
  are homometric then $a_x[\ell] = a_{x'}[\ell]$. Thus $(iv) \implies (i)$.

  Conversely, suppose that $a_x = a_{x'}$. This means
  that for each $\ell$ we have that
  $$m_\ell \alpha^2 + n_\ell \beta^2 + p_\ell \alpha \beta = m'_\ell \alpha^2 + n'_\ell \beta^2 + p'_\ell \alpha \beta.$$
  Let us define 
  $m = m_\ell - m'_\ell$, $n = n_\ell - n'_\ell$, $p = p_\ell - p'_\ell$.
  {If the integers $m,n,p$ are not all zero, then $\alpha, \beta$
    must be contained in the zero set of a quadratic polynomial
    with integer coefficients $m,n,p$, each of absolute value at most
    $N$. This means that for a generic choice of $\a,\b$ we must have
    that $m =n=p=0$;
    }
  i.e., $m_\ell = m'_\ell$, $n_\ell = n'_\ell$, and $p_\ell = p'_\ell$, meaning that the multiplicities of $\ell$ in $S_\alpha(x) - S_\alpha(x),
  S_\beta(x) - S_\beta(x), S_\alpha(x) - S_\beta(x)$ are equal
  to the multiplicities of $\ell$ in $S_\alpha(x') - S_\alpha(x'),
  S_\beta(x') - S_\beta(x'), S_\alpha(x') - S_\beta(x')$, respectively.
  Since for generic $\alpha, \beta$ this is true for every $\ell = 0,\ldots , N-1$ we conclude that
  the partitions $(S_\alpha(x), S_\beta(x))$ and $(S_\alpha(x'), S_\beta(x'))$
  are homometric. Thus, $(i) \implies (iv)$ for generic choice of $\alpha, \beta$.
  
 \end{proof}

\section{Signals with entries taken from a finite alphabet}\label{sec:finite}

We now extend our analysis to account for signals whose entries are taken from a finite alphabet. 
Let $S = \{\a_1, \ldots , \a_{K}\}$ be a set $K$ real numbers
and  let 
$\R_S$ be the set of all vectors in $\R^N$ whose entries are taken from
$S$. A vector $x \in \R_S$ determines a length $K$ partition $A_1(x), \ldots A_K(x)$ of $[0,N-1]$, 
where $A_k(x)  = \{n \in [0,N-1]\,| \, x[n] = \a_k\}$.

\begin{remark}
  In this paper we assume that the alphabet is taken from the reals, but the theory is unchanged if we consider complex alphabet entries.
\end{remark}

\begin{proposition} \label{prop:main}
  For a generic choice of $\a_1, \ldots , \a_{K}$, two vectors
  $x,x' \in \R_S$ have the same auto-correlation if and only if
  the associated partitions $\{A_k(x)\}$ and $\{A_k(x')\}$ are homometric.
\end{proposition}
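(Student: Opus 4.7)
The plan is to generalize the proof of Theorem~\ref{theorem:1} from two letters to $K$ letters almost verbatim, replacing the three monomials $\alpha^2,\beta^2,\alpha\beta$ used there by the $\binom{K+1}{2}$ monomials $\alpha_i\alpha_j$ with $i\leq j$. First I would introduce, for each shift $\ell\in[0,N-1]$ and each ordered pair $(i,j)$ with $1\leq i,j\leq K$, the integer
$$m_{ij}^{(\ell)}(x) \;=\; \#\{\, n\in[0,N-1] : x[n]=\alpha_i \text{ and } x[(n+\ell)\bmod N]=\alpha_j\,\},$$
so that a direct counting argument yields the expansion $a_x[\ell]=\sum_{i,j=1}^{K} m_{ij}^{(\ell)}(x)\,\alpha_i\alpha_j$. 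Using the symmetry $a_x[\ell]=a_x[N-\ell]$ valid for real signals, the collection $\{m_{ij}^{(\ell)}(x)\}_{i,j,\ell}$ carries exactly the same information as the cyclic-distance multi-sets $\{A_i(x)-A_j(x)\}_{i,j}$ appearing in the definition of homometric partitions; verifying this dictionary between directed, mod-$N$ shifts and the symmetric cyclic-distance multi-sets is the only non-mechanical bookkeeping step of the proof.

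The forward direction is then immediate: if the partitions $\{A_k(x)\}$ and $\{A_k(x')\}$ are homometric, then $m_{ij}^{(\ell)}(x)=m_{ij}^{(\ell)}(x')$ for every triple $(i,j,\ell)$, so $a_x[\ell]$ and $a_{x'}[\ell]$ agree termwise as polynomials in $\alpha_1,\ldots,\alpha_K$, and hence $a_x=a_{x'}$ regardless of the alphabet values.

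For the converse I would run a genericity argument, as in the proof of Theorem~\ref{theorem:1}. Since $\R_S$ is finite (only $K^N$ elements), there are only finitely many pairs $(x,x')\in\R_S\times\R_S$ whose partitions are not homometric. For any such pair there exist indices $(i,j,\ell)$ with $m_{ij}^{(\ell)}(x)\neq m_{ij}^{(\ell)}(x')$; because the degree-two monomials $\alpha_i\alpha_j$ with $i\leq j$ are linearly independent in $\R[\alpha_1,\ldots,\alpha_K]$, the difference $a_x[\ell]-a_{x'}[\ell]$ is then a nonzero integer-coefficient polynomial, and its real zero locus is a proper algebraic hypersurface in $\R^K$. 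Taking the finite union of these hypersurfaces over all non-homometric pairs produces a proper algebraic subvariety $V\subset\R^K$ of Lebesgue measure zero. For any choice $(\alpha_1,\ldots,\alpha_K)\notin V$, the equality $a_x=a_{x'}$ forces the partitions of $x$ and $x'$ to be homometric, proving the equivalence for generic alphabets. The main obstacle, aside from the unavoidable indexing bookkeeping mentioned above, is simply choosing notation compact enough that the $K$-fold version of the three-term argument in Theorem~\ref{theorem:1} remains readable.
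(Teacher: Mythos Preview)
Your proposal is correct and follows essentially the same approach as the paper: expand each $a_x[\ell]$ as a quadratic polynomial in $\alpha_1,\ldots,\alpha_K$, identify the coefficient of $\alpha_i\alpha_j$ with the multiplicity of $\ell$ in $A_i(x)-A_j(x)$, and use genericity (equivalently, treating the $\alpha_k$ as indeterminates) to equate coefficients. The paper's version is terser, while you spell out the finite-union-of-hypersurfaces genericity argument and flag the bookkeeping between directed shifts $\ell\in[0,N-1]$ and the symmetric cyclic-distance multi-sets, but the substance is the same.
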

\begin{proof}
  The proof is similar to the proof of the equivalence $(i) \iff (iv)$ in
  Theorem \ref{theorem:1}.
  
  Since the entries $\a_1, \ldots \a_K$ are generic we can treat them as indeterminates. 
  By definition, $a_x[\ell] = \sum_{n =0}^{N-1} x[n]x[n+ \ell]$
  and $a_{x'}[\ell] = \sum_{n=0}^{N-1} x'[n] x'[n+\ell]$, where
  all indices are taken modulo $N$. Since the entries
  of $x,x'$ are taken from the set $S$, 
  $a_x[\ell]$ and $a_{x'}[\ell]$ are quadratic polynomials in $\a_1, \ldots ,\a_K$. The coefficient of $\a_i \a_j$ in $a_x[\ell]$ is the multiplicity of~$\ell$
  in the difference multi-set $A_i(x) - A_j(x)$. Likewise, 
  the coefficient of $\a_i \a_j$ in $a_{x'}[\ell]$ is the multiplicity of $\ell$
  in the difference multi-set $A_i(x') - A_j(x')$. Hence,
  $a_x = a_{x'}$ if and only $A_i(x) - A_j(x) = A_i(x') - A_j(x')$
  for all $i,j$. In other words, $a_x = a_{x'}$ 
  if and only if the corresponding partitions
  are homometric.
\end{proof}

We are now ready  to present our main result, which is motivated by the crystallographic phase retrieval problem of recovering a sparse signal, whose non-zero values are taken from a finite alphabet, from its autocorrelation. 

\begin{theorem}[Sparse signals] \label{thm:sparse}
  Consider two signals $x,x'$ with entries taken from an alphabet
  ${0,\a_2, \ldots , \a_K}$ {with $\a_2,\ldots, \a_K$ generic}. 
  Then, $a_x = a_{x'}$ if and only if the partitions $A(x)$ and $A(x')$
  are homometric.
\end{theorem}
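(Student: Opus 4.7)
This result extends Proposition~\ref{prop:main} to the degenerate situation where one letter of the alphabet, $\a_1$, is set to zero. The reverse direction is immediate: under homometry of the ordered partitions, each monomial $\a_i\a_j$ with $i,j\geq 2$ appears with the same coefficient in $a_x[\ell]$ and $a_{x'}[\ell]$, and since every other monomial vanishes (because $\a_1 = 0$), the two autocorrelations agree.

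For the forward direction I would first repeat the polynomial-coefficient argument of Proposition~\ref{prop:main}, but using only the indeterminates $\a_2,\ldots,\a_K$: for a generic choice of nonzero values, the equality $a_x = a_{x'}$ forces the coefficient of each monomial $\a_i\a_j$ with $i,j\geq 2$ to agree, yielding $A_i(x)-A_j(x) = A_i(x')-A_j(x')$ for all $i,j \in \{2,\ldots,K\}$. Taking cardinalities gives $|A_j(x)| = |A_j(x')|$ for every $j \geq 2$, and therefore also for $j=1$.

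Next, I would recover the remaining difference multisets $A_1(x)-A_j(x)$ in two sub-steps. For $j=1$, set $B(x) := \bigsqcup_{i\geq 2} A_i(x) = A_1(x)^c$. The multiplicity of $\ell$ in $B(x)-B(x)$ is the sum over $i,j \geq 2$ of the multiplicities of $\ell$ in $A_i(x)-A_j(x)$, so the previous step gives $B(x)-B(x) = B(x')-B(x')$, and Patterson's Theorem (Theorem~\ref{thm.patterson}) then yields $A_1(x)-A_1(x) = A_1(x')-A_1(x')$. For $j \geq 2$, I would exploit the partition identity
\[
[0,N-1] - A_j(x) \;=\; \bigsqcup_{i=1}^{K} \bigl(A_i(x) - A_j(x)\bigr).
\]
The multi-set on the left depends only on $|A_j(x)|$ and $N$ (each element of $A_j(x)$ contributes at most two indices of $[0,N-1]$ at any prescribed cyclic distance), so by the previous step it agrees for $x$ and $x'$; subtracting the pieces with $i \geq 2$, which also agree, leaves $A_1(x) - A_j(x) = A_1(x') - A_j(x')$.

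The main obstacle is conceptual: setting $\a_1=0$ eliminates every monomial $\a_1\a_j$ from the autocorrelation, so the multisets $A_1(x)-A_j(x)$ are not directly visible as polynomial coefficients. Unlike in Proposition~\ref{prop:main}, they cannot be read off by pure algebra; the argument must leverage Patterson's self-complementarity theorem (for the diagonal term $A_1-A_1$) together with the partition identity above (for the cross terms $A_1-A_j$) to recover these pieces indirectly from the multiplicities extracted in the case $i,j\geq 2$.
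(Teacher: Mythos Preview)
Your proof is correct, and its overall architecture matches the paper's: both first read off $A_i(x)-A_j(x)=A_i(x')-A_j(x')$ for all $i,j\ge 2$ from the monomial coefficients of the autocorrelation, and both then recover $A_1-A_1$ by applying Patterson's theorem to $B=A_1^c=\bigcup_{i\ge 2}A_i$. Where you diverge is in the recovery of the cross terms $A_1-A_j$ for $j\ge 2$. The paper does this by induction on $k$: it sets $B_k=\bigcup_{j>k}A_j$, applies Patterson again to obtain $B_k^c-B_k^c=\sum_{i,j\le k}(A_i-A_j)$ for $x$ and for $x'$, and then subtracts the already-known summands to isolate $A_1-A_k$. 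You instead invoke the single identity $[0,N-1]-A_j=\bigsqcup_{i}(A_i-A_j)$ and observe that the left-hand side is completely determined by $N$ and $|A_j|$; subtracting the $i\ge 2$ summands then gives $A_1-A_j$ for every $j\ge 2$ at once. Your route is more direct---no induction, and only one appeal to Patterson---while the paper's scheme invokes Patterson roughly $K$ times; neither approach yields a stronger conclusion than the other.
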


\begin{remark}
  The significance of this result is that we no longer assume that
  $\a_1$ is arbitrary. Equivalently, our result states that if we view
  $a_x, a_{x'}$ as polynomials in the variables $\a_1, \ldots ,\a_K$,
  then $a_x(\a_1, \ldots , \a_K) = a_{x'}(\a_1, \ldots , \a_K)$ if and only
  if $a_x(0, \a_2, \ldots , \a_k)= a_{x'}(0, \a_2, \ldots , \a_K)$.
\end{remark}

\begin{proof}
  {If the partitions $A(x)$ and $A(x')$ are homometric then clearly
    $a_x =a_{x'}$.}

  {Conversely,} if $a_x = a_{x'}$, then for every $\ell$ the coefficients
  of $\a_i \a_j$ in $a_x[\ell]$ and $a_{x'}[\ell]$ are equal
  for $i, j \geq 2$. As before, this coefficient
  is just the multiplicity of $\ell$ in the difference multi-sets
  $A_i(x) - A_j(x)$ and $A_i(x') - A_j(x')$ respectively. {Using the
  same reasoning as in the proof of Theorem \ref{theorem:1} 
  we see that if $\a_2, \ldots , \a_k$ are generic then
  $A_i(x) - A_j(x) = A_i(x') -A_j(x')$ for $i,j> 1$.}
  

We begin with the following lemma. 

  \begin{lemma} \label{lem.pigeonhole}
    $A_1[x] - A_1[x] = A_1[x'] - A_1[x']$.
  \end{lemma}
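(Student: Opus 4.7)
The plan is to deduce the lemma from Patterson's theorem (Theorem \ref{thm.patterson}) applied to the support of the signal versus the complement (the zero set).

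First I would unpack what we already know at this point in the proof. From the equality $a_x = a_{x'}$ and the fact that $\a_2, \ldots, \a_K$ can be treated as generic indeterminates, the argument just before the lemma already yields $A_i(x) - A_j(x) = A_i(x') - A_j(x')$ as multi-sets for every pair with $i,j \geq 2$. The only difference multi-set not directly controlled is the one involving $A_1$, which is precisely the object of the lemma.

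Next I would introduce the supports $S(x) = \bigcup_{i=2}^{K} A_i(x) = A_1(x)^c$ and similarly $S(x')$. Because $\{A_i(x)\}_{i \geq 2}$ is a disjoint family, the self-difference multi-set of $S(x)$ decomposes as
\begin{equation*}
S(x) - S(x) = \bigsqcup_{i,j \geq 2} \bigl(A_i(x) - A_j(x)\bigr),
\end{equation*}
and the same holds for $x'$. Combined with Step 1, this gives $S(x) - S(x) = S(x') - S(x')$, i.e.\ the supports $S(x)$ and $S(x')$ are homometric.

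Finally I would apply Patterson's theorem: since $A_1(x) = S(x)^c$ and $A_1(x') = S(x')^c$, the homometry of $S(x)$ and $S(x')$ implies the homometry of their complements, which is exactly the statement $A_1(x) - A_1(x) = A_1(x') - A_1(x')$. There is no real obstacle; the only subtle point is recognizing that the missing coefficient information (the fact that $\a_1 = 0$ kills every monomial involving $\a_1$ in the autocorrelation, so those coefficients cannot be read off directly) is exactly what Patterson's theorem is designed to recover by passing to complements.
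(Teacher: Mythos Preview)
Your proof is correct and is essentially identical to the paper's: the paper defines $B[x]=\bigcup_{i\geq 2}A_i[x]$ (your $S(x)$), observes $B[x]-B[x]=\sum_{i,j\geq 2}(A_i[x]-A_j[x])=B[x']-B[x']$ from the already-established equalities, and then applies Patterson's Theorem to the complements $A_1[x]=B[x]^c$.
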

  \begin{proof}
    Let $B[x] = \cup_{i = 2}^K A_i[x]$ and $B[x'] = \cup_{i=2}^K A_i[x']$.
    Then, $\{A_1[x], B[x]\}$ and $\{A_1[x], B[x']\}$ are length two partitions of $[0,N-1]$.
    Now, $B[x] - B[x] = \uplus_{i,j \geq 2}  (A_i[x] - A_j[x]) =B[x'] -
    B[x']$. {(Here, the notation $\uplus $ refers to the additive union of
    multi-sets. If
    $C_1, \ldots , C_r$ are multi-sets, then an element $c \in \uplus C_i$ appears
    with multiplicity equal to the sum of the multiplicities in each of
    the sets $C_i$.)}
    In other words, the subsets $B[x]$ and $B[x']$ are homometric. 
    Since $A_1[x] = B[x]^{c}$ and $A_1[x'] = B[x']^{c}$, it follows from
    Patterson's Theorem that they also have the same difference sets.
  \end{proof}

  To complete the proof that the partitions are homometric, we need to show
  that the difference sets $A_1[x] - A_i[x]$ and $A_1[x'] - A_i[x']$
  are equal for all $i > 1$. To do this we can argue inductively.
  Assume by induction that $A_1[x] - A_i[x] = A_1[x'] - A_i[x']$
  for $i < k$ with the initial case $k =1$ established by Lemma \ref{lem.pigeonhole}.
  Let $B_{k}[x] = \cup_{j > k} A_k[x]$ and $B_{k}[x'] = \cup_{j > k} A_k[x']$.
  We know already that $B_k[x] - B_k[x] = B_k[x'] - B_k[x']$.
  Hence, by Patterson's Theorem we know that
  $B_k[x]^{c} - B_k[x]^{c} = B_k[x']^{c} - B_k[x']^{c}$. But these
  difference sets are just
  $\uplus_{i, j \leq k}  A_i[x] - A_j[x]$ and $\uplus_{i,j \leq k} A_i[x'] - A_j[x']$. We a priori know that if $i,j \geq 2$ then  
  $A_i[x] - A_j[x] = A_i[x'] - A_j[x']$. By induction, we know that
  $A_1[x] - A_j[x] = A_1[x'] - A_j[x']$ for $j < k$. Hence, by the pigeon-hole principle we conclude that $A_1[x] - A_k[x] = A_1[x] - A_k[x']$ for all $k$.
\end{proof}

\section{Equivalent, Homometric and pseudo-equivalent partitions} \label{sec:examples}
In the conclusion to his 1944 paper \cite{patterson1944ambiguities},
Patterson noted that 
\begin{quote} 
	in very few cases
  are the atoms of a crystal all of one kind and it seems very probable that the presence of a second kind of atom will often resolve the ambiguities which might occur in the location of the first if taken alone.
  \end{quote}
In the following example we illustrate this phenomenon for 1-D signals.

\begin{example} \label{ex:partitions}
  Consider $A= \{0,1,4,7\}$ and $A' = \{0,1,3,4\}$ as subsets of $[0,7]$. These sets are homometric but not equivalent. Let $B = A^{c}= \{2,3,5,6\}$ and
  $B' = (A')^{c} =\{2,5,6,7\}$. Direct inspection or Patterson's Theorem
  implies that the sets $B$ and $B'$ are also homometric but not equivalent. In particular $B-B = B'-B' = \{0^4, 1^2,2^2,3,4\}$ as multi-sets and
any two binary signals supported on $B, B'$ have the same autocorrelation. 
  
  Now, we consider the decomposition of $B$ and $B'$ into two subsets
  of size two $B_1, B_2$ and $B_1', B_2'$, respectively.
  We can now ask which of the six possible ordered partitions $A, B_1, B_2$
  are homometric with any of the six possible partitions $A', B_1', B_2'$.
 Of the six possible three-set partitions of the form $A, B_1, B_2$, 
 only the partitions $(\{0,1,3,4\}, \{2,6\}, \{3,5\})$ are homometric but not
 equivalent to
 the partitions $A', B'_1, B_2'= (\{0,1,4,7\}, \{2,6\}, \{5,7\})$.
 For example the partitions $(\{2,3\}, \{5,6\})$ and $(\{2,5\}, \{6,7\})$
 are not homometric as illustrated in Figure \ref{fig.nonhomo}.
 
 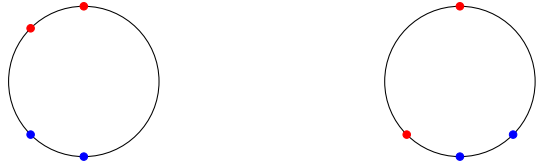
\begin{figure}
    \def\n{8} 
\def\r{1} 

\begin{center}	\begin{tikzpicture}[
		dot/.style={draw,fill,circle,inner sep=1pt}
		]
		\draw (0,0) circle (\r); 
		\draw (5.0,0) circle (\r);
		
		\foreach \i in {2,3 }{
			\node[dot,red] (w\i) at (\i*360/\n:\r) {};	
		} 
		\foreach \i in {5,6} {
			\node[dot,blue] (w\i) at (\i*360/\n:\r) {};	
		} 
		\foreach \i in {2,5} {
			\node[dot,red] (w\i) at ({\r*cos(\i*360/\n)+5.0},{\r*sin(\i*360/\n)}) {};
		} 
		\foreach \i in {6,7} {
			\node[dot,blue] (w\i) at ({\r*cos(\i*360/\n)+5.0},{\r*sin(\i*360/\n)}) {};
		} 
		
  \end{tikzpicture}
  \end{center}
		\caption{{\small The ordered partition $(B_1, B_2) = (\{2,3\},\{5,6\})$ is shown on the left circle with points of
		$B_1$ in red and $B_2$ in blue. The ordered partition $(B'_1, B'_2) = (\{2,5\},\{6,7\})$
		is shown on the right circle with the points in $B'_1$ in red and $B'_2$ in blue. These 
		partitions are clearly not homometric since the distance between the red points on the left is one but in the circle on the right the distance is three. However, the sets $B_1 \cup B_2 = \{2,3,5,6\}$ and $B_1' \cup B_2' = \{2,5,6,7\}$ are homometric; see Example~\ref{ex:partitions}.}} \label{fig.nonhomo}
		\end{figure}
 On the other hand, if we consider partitions of $B$ and $B'$ into three subsets,
 respectively, then no partition of the form $A,B_1, B_2, B_3$ is homometric with
 a partition of the form $A, B'_1, B'_2, B'_3$.
This reinforces Patterson's intuition that by considering a second kind of atom
 we increase the likelihood that the ambiguities of the auto-correlation can be resolved. 

\end{example}

We next consider  pseudo-equivalent partitions. As far as we know, this case was not considered before.

\begin{definition}
  Two partitions $A_1,\ldots , A_K$ and $A'_1, \ldots , A'_K$ of $[0,N-1]$
  are {\em pseudo-equivalent} if there exists elements $\sigma_1, \ldots , \sigma_K \in D_{2N}$
  such that $A'_i = \sigma_i A_i$
\end{definition}

When $K=2$, any pair of pseudo-equivalent partitions are automatically equivalent because in this case if $\sigma_1 A_1 = A_1'$ then $\sigma_1 A_2 = A_2'$ since
$A_2 = A_1^{c}$ and $A_2' = A_1^{'c}$, respectively.
{Note that if the partitions $A_1, \ldots ,A_k$ and $A'_1, \ldots , A'_k$
  are pseudo-equivalent then $A_i -A_i = A'_i -A'_i$ for each $i$.}
However, in general pseudo-equivalent partitions need not be homometric and homometric
partitions need not be pseudo-equivalent.

\begin{example} 
  The ordered partitions $\{0,1,4\}, \{7\}, \{3\}, \{2,5,6\}$ and
  $\{0,1,4\}, \{3\}, \{7\}, \{2,5,6\}$ are pseudo-equivalent and also
  homometric but not equivalent.
\end{example}

{Although homometric partitions need not be pseudo-equivalent and vice-versa,  a numerical experiment seems to indicate that for approximately uniform partitions (i.e., partitions where the sets have approximately the same size)}
homometric partitions are in fact pseudo-equivalent.

\begin{example} \label{ex:pseudo}
  We conducted the following experiment. For a given $N$ in the range, $N = 6, \ldots, 13$, we considered a set $S(N)$ of partitions
where $N_1 = \lceil N/3 \rceil$, $N_2 = \lceil N - N_1/2 \rceil $, $N_3 = N -N_2 -N_3$. When $N=6,7$ we considered
  all partitions and when $N \geq 8$ we took a random sample of size
  300. (For $N =6,7$ there are fewer than 300 partitions).
Each of the $\binom{|S(N)|}{2}$ pairs of partitions in $S(N)$ were
tested to see if they were homometric.
As indicated in Table~\ref{table}, except for the case $N =12$, all homometric pairs found were either equivalent or pseudo-equivalent.

 \begin{table}
  \begin{tabular}{c| c| c| c| c|}
    N & Partition Sizes &Equivalent Pairs & Pseudo-equivalent pairs & Total homometric pairs\\
\hline
6 & 2,2,2,& 369 &  0 &  369\\
7 & 3,2,2 & 1218 & 0 & 1218\\
8 & 3,3,2 & 2005 & 99 & 2104\\
9 & 3,3,3 &  813 & 158 & 971\\
10 & 4,3,3 & 360 & 12 & 372\\
11 & 4,4,3 &148 &1 &149\\
12 & 4,4,4  & 62&3 &70\\
13 & 5,4,4 & 10& 0& 10
    \end{tabular}
\caption{Results of Example~\ref{ex:pseudo} \label{table}}
  \end{table}

\end{example}

Unfortunately, the problem of analyzing, for a particular partition type, which partitions are homometric but not equivalent seems to be an extremely difficult combinatorial problem. 
We conclude with the following positive result when each atom in the alphabet appears with multiplicity exactly one. {To state our result we introduce the following notation. An ordered partition of $A_1, \ldots , A_K$ of $[1,N]$
  has {\em type}
  $[n_1, \ldots , n_K]$, where $n_i = |A_i|$.}
\begin{proposition}
Any  two ordered partitions of {type} $[N-K,\underbrace{1,\ldots,1}_{K \text{times}}]$ are homometric
if and only if they are equivalent. 
\end{proposition}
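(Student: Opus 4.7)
The forward direction (equivalence implies homometricity) is the natural extension of Lemma~\ref{lem:homometric}, recorded in the paper immediately after the definition of homometric partitions, so I focus on the converse. Suppose $(A_0, \{a_1\}, \ldots, \{a_K\})$ and $(A_0', \{a_1'\}, \ldots, \{a_K'\})$ are homometric ordered partitions of $[0,N-1]$. Because the last $K$ parts are singletons, the condition $\{a_i\} - \{a_j\} = \{a_i'\} - \{a_j'\}$ reduces to the scalar equality $d(a_i, a_j) = d(a_i', a_j')$ of cyclic distances, so homometricity captures precisely the full pairwise distance matrix of the ordered tuples. The goal is then to produce $\sigma \in D_{2N}$ with $\sigma(a_i) = a_i'$ for every $i$; since $A_0$ and $A_0'$ are the complements of $\{a_1,\ldots,a_K\}$ and $\{a_1',\ldots,a_K'\}$, any such $\sigma$ automatically satisfies $\sigma(A_0) = A_0'$.

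After applying cyclic shifts to both partitions, I normalize $a_1 = a_1' = 0$; any $\sigma \in D_{2N}$ fixing $0$ must then lie in $\{e, s\}$. From $d(0, a_i) = d(0, a_i')$ we immediately get $a_i' \in \{a_i,\, N - a_i\}$ for each $i \geq 2$. The workhorse will be the elementary observation that for any $x, y \in [0,N-1]$, the equality $d(x,y) = d(x, N-y)$ holds if and only if $2x \equiv 0$ or $2y \equiv 0 \pmod{N}$; this follows from writing $d(u,v) = f(u-v)$ for the even function $f(t) = \min(t \bmod N,\, N - (t \bmod N))$ and noting that $f(a) = f(b)$ forces $a \equiv \pm b \pmod{N}$.

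The strategy is now to select an anchor index $i_0 \geq 2$ with $2 a_{i_0} \not\equiv 0 \pmod{N}$, apply the reflection $s$ to the primed partition if necessary so that $a_{i_0}' = a_{i_0}$, and then use the constraint $d(a_{i_0}, a_j) = d(a_{i_0}, a_j')$ for each remaining $j$. The observation above rules out the alternative $a_j' = N - a_j$ unless $2 a_j \equiv 0 \pmod{N}$, but in that exceptional case $N - a_j = a_j$, so the two alternatives coincide and $a_j' = a_j$ is forced anyway. If no such anchor exists, then every $a_i$ with $i \geq 2$ lies in $\{0, N/2\}$, which combined with the distinctness of the singleton parts and $a_1 = 0$ forces $K \leq 2$; both residual cases ($K = 1$, or $K = 2$ with $a_2 = a_2' = N/2$) are immediate. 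The composition of the applied rotations with (possibly) the reflection $s$ then yields the desired $\sigma$. The only real subtlety is the bookkeeping around the antipodal element $N/2$ in the even-$N$ case, and this is precisely what is handled by the case split on the existence of a non-two-torsion anchor.
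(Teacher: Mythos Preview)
Your argument is correct and rests on the same elementary fact the paper exploits, namely that $d(x,y)=d(x,N-y)$ forces $2x\equiv 0$ or $2y\equiv 0\pmod N$. The organization, however, is different: the paper proceeds by induction on $K$, using the inductive hypothesis to align the first $K$ singletons and then arguing about the $(K+1)$st, whereas you give a direct global argument by normalizing $a_1=a_1'=0$, choosing a single non--two--torsion anchor $a_{i_0}$, fixing the reflection ambiguity once using that anchor, and then pinning down every remaining $a_j'$ simultaneously via the constraint $d(a_{i_0},a_j)=d(a_{i_0},a_j')$. Your route is cleaner: it avoids the inductive bookkeeping and makes transparent that the only genuine obstruction is the $2$-torsion element $N/2$, which you dispatch with a short case split on whether a non--two--torsion anchor exists. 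The paper's inductive proof buys nothing extra here and in fact has to untangle the same $N/2$ edge case inside the induction step.
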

\begin{proof}
  We use induction on $K$. For $K=1$, any two partitions of type $[N-1,1]$
  are necessarily equivalent because the dihedral group acts transitively on
  the set $[0,N-1]$. Hence, if $(A, \{\a\})$ and $(A', \{\a'\})$ are two partitions
  of this form, then there exists $\sigma \in D_{2N}$ such that $\a' = \sigma(\a)$.
 
  {Assume by induction that the statement holds for partitions
    of type $[N-K, 1, \ldots , 1]$; i.e.,  
if $(A',\{a'_1\},\ldots , \{a'_K\})$ and $(A,\{a_1\}, \ldots
, \{a_K\})$ are two homometric ordered partitions, then they are equivalent.
We will prove that the statement holds for partitions of type
  $[N-K-1,1,\ldots, 1]$.}

  {The induction hypothesis implies
  that if  $(\a_1', \ldots , \a_K')$ and $(a_1, \ldots , a_K)$ are
  any sequences of distinct integers in $[0,N-1]$ such that
$d(a'_i, a'_j) = d(a_i,a_j)$}
  for all $1 \leq i < j \leq K$
  then there exists $\sigma \in D_{2N}$ such that $\a_i'= \sigma \a_i$
  for $i = 1, \ldots , K$.  To establish the induction step we must prove
  that if $(\a_1, \ldots , \a_K, \a_{K+1})$ and $(\a'_1,\ldots , \a'_K, \a'_{K+1})$
  are two sequences 
{such that $d(a'_i, a'_j) = d(a_i, a_j)$ 
for $1 \leq i, j \leq K+1$},
then there exists $\sigma \in D_{2N}$ such that $\a_i' = \sigma \a_i$ for $i =1,\ldots K+1$. 
  By induction, there exists $\sigma_1 \in D_{2N}$ such
  that $\sigma_1(\a_1, \ldots , \a_K) = (\a_1', \ldots , \a_K')$.
  In particular, we may assume that $(\a_1', \ldots , \a_K',\a_{K+1}')$ is equivalent to a sequence of the form $(\a_1, \ldots , \a_K, \a_{K+1}')$.
  Applying a suitable element of $D_{2N}$ we may also assume that $\a_1 = 0$.
  Hence, $d(\a_{K+1},0)= d(\a'_{K+1},0)$
  which implies that
  $\a'_{K+1} = N- \a_{K+1}$ or $\a'_{K+1} = \a_{K+1}$. In the latter case we are done since the sequences would be equal.
  If $\a'_{K+1} = N-\a_{K}$ then for $i =2, \ldots , K$ we have that  
  $d(N - \a_{K+1}, \a_i) = d(\a_{K+1},\a_i)$
  This implies that
  either $2\a_{K+1} \equiv 0 \bmod N$ or $2\a_i \equiv 0 \bmod N$. Since neither
  $\a_{K+1}$ nor $\a_i$ can be zero, we see that in either case $N$ must be even and in the former case we have that $\a_{K+1} = N/2$ so
  $\a'_{K+1} = \a_{K+1}$. On the other hand, the integers $\a_2, \ldots , \a_K$
  are distinct so we cannot have that $\a_i = N/2$ for $2 \leq i \leq K$ unless
  $K=2$. In the case of $K=2$, then we see that our sequences would necessarily be
  of the form
  $(0,N/2, \a_{3})$ and $(0,N/2, N-\a_{3})$. However, these sequences are also dihedrally equivalent since they are related by the reflection $\a \mapsto N-\a$.
\end{proof}
\begin{remark}
  {When $N-K > N/2$ we expect that analogous results hold for
    partitions of type $[N-K,a_1,\ldots ,a_L]$ where $L \ll K$ and the
    $a_\ell$ are approximately equal. Unfortunately, investigating
    this problem is currently beyond reach from both a computational
    and theoretical perspective.}
\end{remark}

\section{Extensions} \label{sec.extension}
{
\subsection{Finite alphabet phase retrieval in finite abelian groups} \label{sec.abelian}
The finite alphabet phase retrieval problem  can be generalized to
any finite abelian group. Let $G$ be a finite abelian group
and let $V$ be the vector space of functions $x \colon A \to \K$, where
$\K = \R$ or $\K=\C$.
In the case of one-dimensional phase retrieval, $G =\Z_N$, and for higher-dimensional phase retrieval $G = \Z_N^M$ is a product of cyclic groups.  In this case,
the auto-correlation is defined as a function $A \to \K$ defined by the formula
\[ a_x[\ell] = \sum_{\ell' \in A} x[\ell']\overline{x[\ell + \ell']}.\]

Given a subset $A \subset G$, we can again define the $G$-difference set $A-A$
\cite[Appendix E]{bendory2020toward},
and define the notion of homometric sets. The $G$-difference set 
is invariant under the action of a group $D_G = G \ltimes \Z_2$, 
and we say that two subsets $A,A'$ are equivalent if there is an element
$\sigma \in D_G$ such that $A' = \sigma A$. The proof of Lemma \ref{lem:homometric} easily generalizes to show that
two equivalent subsets of $G$ are homometric. Likewise,
 Theorem~\ref{theorem:1} and Proposition~\ref{prop:main} generalize to partitions of finite abelian groups. However, the existing proofs Patterson's theorem make use of the fact that the signals are one-dimensional; i.e., that the group
is $\Z_N$. A natural question for future work is to prove the analogue of
Patterson's theorem for any abelian group $G$ . If such a theorem held, then Theorem~\ref{thm:sparse} could be generalized to the case where $G$ is an arbitrary finite abelian group.

\subsection{Other questions} \label{sec.otherquestions}
\begin{enumerate}
  \item
In our model we assume that each atom is represented by a single letter. An interesting alternative model to investigate is to assume that the atoms are represented by a few letters placed consecutively in $[0,N-1]$. While this model is combinatorially more complicated, it may also be more likely to resolve the ambiguities of the auto-correlation.

\item Another model worth further investigation,
  particularly in higher dimensions, is to assume that the separate atoms are placed within the basic crystal structure in a regular way. For example, if $G= \Z_N^2$ and our alphabet is $\{a,b\}$ we might assume that the sets $S_a(x)$ and $S_b(x)$ are the orbits of different cyclic subgroups of $G$. (Note that $G=\Z_N^2$ has many distinct cyclic subgroups.)
  
\item In an X-ray crystallography experiment, the measurement is
  contaminated with noise, which is characterized by Poisson
  statistics. In this case, we are not searching for a signal which is
  precisely consistent with the power spectrum (as in this paper), but
  only approximately consistent. Understanding the
  information-theoretic limits of this problem, namely, what is the
  optimal expected error regardless of any specific algorithm, is an
  important research question.
\end{enumerate}
}
\section*{Acknowledgments}
The authors were supported by the BSF grant no. 2020159. T.B. is also supported in part by the NSF-BSF grant no. 2019752, and the ISF grant no. 1924/21 and D.E. was also supported by NSF-DMS 1906725 and NSF-DMS 2205626. 
{The problems posed in Section \ref{sec.otherquestions} are based on remarks of the anonymous referee.}
  

\bibliographystyle{plain}

\begin{thebibliography}{10}

\bibitem{bendory2017fourier}
Tamir Bendory, Robert Beinert, and Yonina~C Eldar.
\newblock Fourier phase retrieval: Uniqueness and algorithms.
\newblock In {\em Compressed Sensing and its Applications}, pages 55--91.
  Springer, 2017.

\bibitem{bendory2020toward}
Tamir Bendory and Dan Edidin.
\newblock Toward a mathematical theory of the crystallographic phase retrieval
  problem.
\newblock {\em SIAM Journal on Mathematics of Data Science}, 2(3):809--839,
  2020.

\bibitem{bendory2022algebraic}
Tamir Bendory and Dan Edidin.
\newblock Algebraic theory of phase retrieval.
\newblock {\em Notices of the AMS}, 69:3489--3499, 2022.

\bibitem{bendory2022sparse}
Tamir Bendory, Oscar Mickelin, and Amit Singer.
\newblock Sparse multi-reference alignment: Sample complexity and computational
  hardness.
\newblock In {\em ICASSP 2022-2022 IEEE International Conference on Acoustics,
  Speech and Signal Processing (ICASSP)}, pages 8977--8981. IEEE, 2022.

\bibitem{buerger1976proofs}
Martin~J. Buerger.
\newblock Proofs and generalizations of {P}atterson's theorems on homometric
  complementary sets.
\newblock {\em Zeitschrift für Kristallographie - Crystalline Materials},
  143(1-6):79--98, 1976.

\bibitem{elser2017complexity}
Veit Elser.
\newblock The complexity of bit retrieval.
\newblock {\em IEEE Transactions on Information Theory}, 64(1):412--428, 2017.

\bibitem{elser2018benchmark}
Veit Elser, Ti-Yen Lan, and Tamir Bendory.
\newblock Benchmark problems for phase retrieval.
\newblock {\em SIAM Journal on Imaging Sciences}, 11(4):2429--2455, 2018.

\bibitem{ghosh2022sparse}
Subhroshekhar Ghosh and Philippe Rigollet.
\newblock Sparse multi-reference alignment: Phase retrieval, uniform
  uncertainty principles and the beltway problem.
\newblock {\em Foundations of Computational Mathematics}, pages 1--48, 2022.

\bibitem{grohs2020phase}
Philipp Grohs, Sarah Koppensteiner, and Martin Rathmair.
\newblock Phase retrieval: uniqueness and stability.
\newblock {\em SIAM Review}, 62(2):301--350, 2020.

\bibitem{iglesias1981pattersons}
Juan~E. Iglesias.
\newblock On {P}atterson's cyclotomic sets and how to count them.
\newblock {\em Z. Krist.}, 156(3-4):187--196, 1981.

\bibitem{keiper2017reconstruction}
Sandra Keiper, Gitta Kutyniok, Dae~Gwan Lee, and G{\"o}tz Pfander.
\newblock Reconstruction of finite-valued sparse signals.
\newblock In {\em Wavelets and Sparsity XVII}, volume 10394, pages 244--252.
  SPIE, 2017.

\bibitem{keiper2017compressed}
Sandra Keiper, Gitta Kutyniok, Dae~Gwan Lee, and G{\"o}tz~E Pfander.
\newblock Compressed sensing for finite-valued signals.
\newblock {\em Linear Algebra and its Applications}, 532:570--613, 2017.

\bibitem{millane1990phase}
Rick~P Millane.
\newblock Phase retrieval in crystallography and optics.
\newblock {\em JOSA A}, 7(3):394--411, 1990.

\bibitem{patterson1934fourier}
A.~L. Patterson.
\newblock A {F}ourier series method for the determination of the components of
  interatomic distances in crystals.
\newblock {\em Phys. Rev.}, 46:372--376, Sep 1934.

\bibitem{patterson1935direct}
A.~L. Patterson.
\newblock A direct method for the determination of the components of
  interatomic distances in crystals.
\newblock {\em Zeitschrift für Kristallographie - Crystalline Materials},
  90(1-6):517--542, 1935.

\bibitem{patterson1944ambiguities}
A.~L. Patterson.
\newblock Ambiguities in the {X}-ray analysis of crystal structures.
\newblock {\em Phys. Rev.}, 65:195--201, Mar 1944.

\bibitem{ranieri2013phase}
Juri Ranieri, Amina Chebira, Yue~M Lu, and Martin Vetterli.
\newblock Phase retrieval for sparse signals: Uniqueness conditions.
\newblock {\em arXiv preprint arXiv:1308.3058}, 2013.

\bibitem{rosenblatt1984phase}
Joseph Rosenblatt.
\newblock Phase retrieval.
\newblock {\em Communications in mathematical physics}, 95(3):317--343, 1984.

\bibitem{rosenblatt1982structure}
Joseph Rosenblatt and Paul~D. Seymour.
\newblock The structure of homometric sets.
\newblock {\em SIAM J. Algebraic Discrete Methods}, 3(3):343--350, 1982.

\bibitem{sarangi2023super}
Pulak Sarangi, Ryoma Hattori, Takaki Komiyama, and Piya Pal.
\newblock Super-resolution with binary priors: Theory and algorithms.
\newblock {\em arXiv preprint arXiv:2301.01724}, 2023.

\bibitem{sarangi2021no}
Pulak Sarangi and Piya Pal.
\newblock No relaxation: Guaranteed recovery of finite-valued signals from
  undersampled measurements.
\newblock In {\em ICASSP 2021-2021 IEEE International Conference on Acoustics,
  Speech and Signal Processing (ICASSP)}, pages 5440--5444. IEEE, 2021.

\bibitem{shechtman2015phase}
Yoav Shechtman, Yonina~C Eldar, Oren Cohen, Henry~Nicholas Chapman, Jianwei
  Miao, and Mordechai Segev.
\newblock Phase retrieval with application to optical imaging: a contemporary
  overview.
\newblock {\em IEEE Signal Processing Magazine}, 32(3):87--109, 2015.

\end{thebibliography}

\end{document}